\documentclass[letterpaper, 10 pt, conference]{ieeeconf}
\usepackage{graphicx,charter, latexsym}
\newtheorem{definition}{Definition}

\usepackage{amssymb, amsmath,graphicx,charter, latexsym}

\newtheorem{theorem}{Theorem}

\usepackage{multirow,url,float,epstopdf,cite}
\usepackage{verbatim}
\newtheorem{assumption}{Assumption}
\usepackage{algorithmic}
\usepackage{algorithm}
\usepackage{tikz}
\usepackage{pifont}
\usepackage{xcolor}
\usetikzlibrary{patterns,calc}
\usetikzlibrary{shapes.geometric, arrows}
\usepackage[overlay,absolute]{textpos}

\tikzstyle{block}=[draw opacity=0.7,line width=1.4cm]
\tikzstyle{ag} = [circle, radius =3cm, text centered, draw=black]
\tikzstyle{startstop} = [rectangle, rounded corners, minimum width=3cm, minimum height=1cm,text centered, draw=black, fill=red!30]
\tikzstyle{io} = [trapezium, trapezium left angle=70, trapezium right angle=110, minimum width=1cm, minimum height=1cm, text centered, draw=black, fill=blue!30]
\tikzstyle{process} = [rectangle, draw,fill=orange!30, text width = 20em, text centered, rounded corners, minimum height=4em, minimum width=1cm]
\tikzstyle{upd} = [rectangle, draw,fill=orange!30, text width = 5em, text centered, rounded corners, minimum height=4em, minimum width=1cm]
\tikzstyle{decision} = [diamond, minimum width=3cm, minimum height=1cm, text centered, draw=black, fill=green!30]
\tikzstyle{arrow} = [thick,->,>=stealth]
\tikzset{
  solid node/.style={circle,draw,inner sep=1.2,fill=black},
  hollow node/.style={circle,draw,inner sep=1.2},
}

\IEEEoverridecommandlockouts

\begin{document}
\title{The ISO Problem: Decentralized Stochastic Control via Bidding Schemes}
\author{Rahul Singh, P. R. Kumar and Le Xie 
\thanks{This material is based upon work partially supported by NSF under Contract Nos. ECCS-1546682, CPS-1239116, and Science \& Technology Center Grant CCF-0939370.}
\thanks{Rahul Singh, P. R. Kumar and Le Xie are with Texas A\&M University, College Station, TX 77840, USA.
        {\tt\small rsingh1@tamu.edu, prk@tamu.edu, lxie@tamu.edu}}%
}

\maketitle
\IEEEpeerreviewmaketitle
\begin{abstract}
We consider a smart-grid connecting several agents, modeled as stochastic dynamical systems, who may be electricity consumers/producers. At each discrete time instant, which may represent a $15$ minute interval, each agent may consume/generate some quantity of electrical energy. The Independent System Operator (ISO) is given the task of assigning consumptions/generations to the agents so as to maximize the sum of the utilities accrued to the agents, subject to the constraint that energy generation equals consumption at each time.

This task of coordinating generation and demand has to be accomplished by the ISO without the agents revealing their system states, dynamics, or utility/cost functions. We show how and when a simple iterative procedure converges to the optimal solution. The ISO iteratively obtains electricity bids by the agents, and declares the tentative market clearing prices. In response to these prices, the agents submit new bids.

On the demand side, the solution yields an optimal demand response for dynamic and stochastic loads. On the generation side, it provides the optimal utilization of stochastically varying renewables such as solar/wind, and generation with fossil fuel based generation with dynamic constraints such as ramping rates. Thereby we solve a decentralized stochastic control problem, without agents sharing any information about their system models, states or utility functions.
\end{abstract}

\IEEEpeerreviewmaketitle

\section{Introduction}
We consider the problem faced by the electricity grid operator, called the Independent System Operator (ISO). If the ISO knows the total demand of the loads, it has to solve the problem of allocating the required power among different generators so that the total cost of production is minimized\footnote{There are additional aspects such as security against contingencies, etc., that we neglect here.}. This problem can be solved by the generators bidding their marginal cost curves, and the ISO performing the optimization to obtain and declare the market clearing price. There are also additional aspects such as assuring that the power flow can be delivered over the network~\cite{bose1,oren1,gali1}.

The above deterministic static model with a fixed demand is insufficient for the oncoming era when we want to maximize the integration of renewable energy sources such as wind and photo voltaic, which are dynamic and vary unpredictably with time. Their modeling requires a dynamic stochastic system. Dynamic models can also be used to model features such as ramping constraints that are important for modeling fossil fuel generators. 

When employing renewable energy we need to make the demand adapt to the availability of renewable energy, called ``demand response", in contrast to the traditional scenario where demand is inflexible and supply needs to match whatever demand is. Loads generally have dynamic constraints, since, for example, air conditioners can be deferred for a while but not indefinitely. So loads also need to be modeled as dynamic control systems. Further, since environmental variables such as temperature are involved, they may be uncertain, and hence will also generally need to modeled as stochastic dynamic systems.

Such dynamic models can also model storage devices where the state is the amount of energy stored, and prosumers, such as homes with solar panels, which may switch at uncertain times from being consumers to generators. Therefore we model all the agents involved as stochastic dynamical systems.  

The goal in operating this system is to maximize the sum of the utilities of all the agents. There are however several constraints on information sharing that need to be respected. Individual agents may be averse to sharing system states with each other. They may not even be willing to share their individual system models or their individual utility functions for several reasons ranging from the competitive nature of commercial enterprises, to protecting privacy of consumers' home states.

The overall systemwide optimality of such a system is sought to be achieved by the ISO. It needs to both determine the optimal demand response over time, as well as allocate generation over time among the lowest cost generators, in the face of stochastic uncertainty, and to do so at minimum systemwide cost. The ISO would like to achieve this by simply determining prices and leaving each agent to its own selfish utility maximization, as in general equilibrium theory~\cite{arrow}.

Our main results are the following. We establish iterative interaction processes such as tatonnement~\cite{walras1} under which the ISO can indeed perform this task for certain stochastic dynamic systems. We also address the complexity of this task under several scenarios. In the case where the agents can be modeled as linear Gaussian stochastic systems and the cost functions are quadratic, we show that a simple scheme not involving contingent markets yields the systemwide global optimum.
\section{System Model}\label{systemmodel}
Consider a smart-grid consisting of $M$ agents which may be generators, loads, prosumers or storage devices, each modeled as a stochastic dynamical system. The following are the key ingredients of our system:
\begin{enumerate}
\item \textit{Randomness } is modeled through a probability space $\left(\Omega,\mathcal{F},\mathbb{P} \right)$. The ``state of the world" $\omega\in \Omega$, and captures ``random" phenomena such as weather, wind-speed, coal shortage, or a damaged wind-turbine. It affects agent $i$ through the random processes $N_i(\omega,t)$ and $N_c(\omega,t)$ $t=0,1,2,\ldots,T-1$. We regard $N_c(t)$ as a ``common" uncertainty that affects and is known to causally by all agents (e.g., temperature of a city), while
$N_i(t)$ is a ``private" uncertainty specific and known causally only to agent $i$. We denote by $\mathcal{F}_t$ the sigma algebra generated by all the noises up to and including time $t$.
\item \textit{Agents: } Each agent $i$ has a state $X_i(t)$ at time $t$ known to it, that evolves as,
\begin{align}\label{dyna1}
X_i(t+1) = f_i^t(X_i(t),U_i(t),N_i(t),N_c(t)),
\end{align}
where $U_i(t)\in \mathbb{R}$ is the amount of electricity supplied (negative if consumed) to the grid by agent $i$ at time $t$. Each discrete time instant corresponds to a $15$-minute interval of the real-time market implemented by the ISO.
\item One-step \textit{Cost function} of an agent $i$, $c_i(x_i,u_i)$ (or its negative, a one-step utility function $-c_i(x_i,u_i)$). For producers, this cost could be due to labor, coal, etc.. For consumers, this could represent the cost incurred due to the high temperature of house/business facility, or the cost incurred due to a delay in performing a task resulting from non-purchase of electricity.
\item \textit{System Operating Cost} is  the expected value of the sum of the finite horizon total costs incurred over the time duration $\{0,1,2,\ldots,T\}$ by all the agents,
\begin{align}\label{swf}
\mathbb{E}\left(\sum_{t=0}^{T} \sum_{i=1}^{N}c_i \left(X_i(t),U_i(t)\right) \right).
\end{align}
The time horizon $T$ can, for example, be $96$ which corresponds to one day. It is the total electricity generation cost minus the utility provided to the consumers. 
\item \textit{Energy Balance Constraint}: The basic constraint that we will focus on is that total generation must equal total consumption at each time $t$, i.e., $\sum_{i=1}^N U_i(t) = 0$ at each time $t$.
\item \textit{The ISO} is an agency that accepts electricity purchase/sale bids that are submitted by the agents for each time slot $t=0,1,\ldots,T$. In our model, we allow for the agents and ISO to iterate on the bids before the market clearing price is declared. Once the iterations have converged, the ISO declares the market clearing prices, and the agents generate/consume the agreed electrical energies at the declared prices.  
\end{enumerate}
\section{The ISO Problem}\label{isoproblem} 
The ISO problem is to solve the following constrained stochastic dynamic control problem,
\begin{align}\label{p0}
& \min \mathbb{E} \left\{\sum_{t=0}^{T} \sum_{i=1}^{M}c_i \left(X_i(t),U_i(t)\right)\right\}\notag\\
&\mbox{such that } \sum_i U_i(t) = 0, t=0,1,\ldots,T-1, \notag\\
&\mbox{and } X_i(t+1) = f_i^t(X_i(t),U_i(t),N_i(t),N_c(t)), \mbox{ for }\notag\\
& i =1,2,\ldots,M,\mbox{ and } t=0,1,\ldots,T-1.
\tag{ISO Problem}
\end{align} 
The expectation above is taken with respect to the combined uncertainty or ``noise" process $N(t):=(N_1(t),N_2(t),\ldots,N_M(t),N_c(t))$ for $t=0,1,\ldots,T-1$.

The noises $\left\{N_1(t), N_2(t),\ldots,N_M(t),N_c(t)\right\}$ may be dependent random variables. There can also be dependence across time.
\section{Problem Statement, Key Questions and Goals}\label{gol}
The key issues of interest are the following:
\begin{enumerate}
\item Since the ISO Problem is a decentralized control problem with non-classical information structurel~\cite{wsh,sardar}, we would like to know whether it is possible to achieve the exact optimal performance as attained by centralized control. As our analysis will show, the ISO is able to attain optimal coordination amongst the $M$ dynamic systems through announced ``prices" under some conditions. 
\item \textit{Decentralized Optimization and Sufficient Statistics} The~\ref{p0} can be viewed as a constrained Markov Decision Process (MDP)~\cite{altman}. The current state of our knowledge does not allow us to handle general MDPs with different observation patterns and different cost functions. Our scheme establishes certain ``sufficient statistics" for each agent to make optimal decisions in a decentralized environment~\cite{striebel,blackwell}.
\end{enumerate}
We will show that there exist simple ``iterative bidding schemes" (IBS) which yield the same performance as that of the optimal centralized controller under some models.
\section{Related Works}
No similar results appear to be known for the general decentralized stochastic control problem. Team problems have been extensively studied, e.g.,~\cite{radnor,sardar}, but the formulations are very restrictive in that each agent needs to know the system dynamics of the other agents. Even when the models are known, there are still considerable difficulties in decentralized stochastic control. When agents do not share observations, severe complexity can set in, even in an otherwise linear quadratic Gaussian problem, as pointed out by Witsenhausen in his counterexample of a two stage problem~\cite{wsh}. The role of observation, signaling~\cite{sardar}, and the trade-off between communication and control, are evident from this counterexample~\cite{wsh}. There are some recent structural results~\cite{nayar} and on sufficient statistics~\cite{jeff} under the restrictive assumption that the agents know the dynamics of all other agents. But the proposed solutions suffer from the curse of dimensionality.

We show below that the ISO Problem is an example of a decentralized control system with non-classical observation patterns in which signaling can successfully result in globally optimum performance. The agents need not reveal their observations, state values, their individual system dynamics or their individual cost-functions. We construct concrete signaling schemes which encode-decode the information required in order to recover the same performance as that of centralized control. From the economics side, this work is an extension of general equilibrium theory~\cite{arrow1}. To the authors' knowledge there does not appear to be any similar result for coordinating multiple LQG systems or the efficiency of the simplified signaling. 

In the setting of the energy market,~\cite{schwepe,schwepe1} discuss a general framework for the operation of an electicity grid via the establishment of a marketplace in which the electricity purchases are done on the basis of spot prices. Thus, electrical energy is treated as a commodity which can be bought, sold, and traded, taking into account its time-and space-varying values and costs, and a framework is presented for the establishment of an energy marketplace. Our results can be viewed as providing guidelines for the operation of this marketplace via simple bidding schemes when the agents are nonlinear stochastic dynamical systems. Our scheme maximize the net social utlity.

Viewed from the power system end, there have been many efforts since the deregulation of the electricity sector on a market-based framework to clear the system. Ilic et al. \cite{Ilic2011framework} have proposed a two-layered approach that internalizes individual constraints of market participants while allowing the ISO to manage the spatial complexity. The approximated MPC algorithm is shown to perform well in many realistic applications.
 
In order to analyze the strategic interactions between the ISO and market participants, game theoretical approaches have been proposed, e.g., Zhu et al\cite{Zhu2013} uses a Stackelberg game framework for studying economic dispatch with demand response. Wang et al ~\cite{Wang2014gametheoretic} pose the problem as a noncooperative game, and prove the existence of a Nash equilibrium under some assumtpions. 

One of the major challenges we address is how to elicit optimal demand response without revealing the inherent dynamic models of the loads to the ISO.
\section{A Tree Visualization of System Randomness }\label{sec:tree}
A tree visualization of the system randomness will be insightful in the discussions to follow; see Fig~\ref{extenform}. The combined system comprising of all the $M$ agents evolves as,
\begin{align}
X(t+1) = f^t(X(t),U(t),N(t)).
\end{align}
Let us assume for the time being that the noise process $N(t)$ is allowed to assume only finitely many values at each time. We can then construct an uncertainty tree of depth $T$, in which the root node corresponds to initial system state, and every path from the root to a leaf node corresponds to a unique realization of the noise sequence $\left(N(0),N(1),\ldots,N(T-1)\right)$.
\begin{figure}[h]\resizebox{8.5cm}{5cm}{ 
\begin{tikzpicture}[node distance=4cm][font=\scriptsize]
  \tikzset{
    level 1/.style={level distance=15mm,sibling distance=90mm},
    level 2/.style={level distance=15mm,sibling distance=40mm},
    level 3/.style={level distance=15mm,sibling distance=20mm},
    level 4/.style={level distance=15mm,sibling distance=20mm},
  }

  \node[hollow node,label=above:{Initial State $\left(x_1,x_2\right)$}]{}
    child{node[solid node,label=left:{}]{}
      child{node(l1)[solid node]{}
        child{node[solid node,label=below:{}]{}edge from parent node[left]{$N_1(2)=0$}}
        child{node[solid node,label=below:{}]{}edge from parent node[right]{$N_1(2)=1$}}
        edge from parent node[left]{$N_2(1) =0$}
      }
      child{node(l2)[solid node]{}
        child{node[solid node,label=below:{}]{}edge from parent node[left]{$N_1(2)=0$}}
        child{node[solid node,label=below:{}]{}edge from parent node[right]{$N_1(2)=1$}}
        edge from parent node[right]{$N_2(1) =1$}
      }
      edge from parent node[left,xshift=-10]{$N_1(0)=0$}
    }
    child{node[solid node,label=right:{}]{}
      child{node(r1)[solid node]{}
        child{node[solid node,label=below:{}]{}edge from parent node[left]{$N_1(2)=0$}}
        child{node[solid node,label=below:{}]{}edge from parent node[right]{$N_1(2)=1$}}
        edge from parent node[left]{$N_2(1) =0$}
      }
      child{node(r2)[solid node]{}
        child{node[solid node,label=below:{}]{}edge from parent node[left]{$N_1(2)=0$}}
        child{node[solid node,label=below:{}]{}edge from parent node[right]{$N_1(2)=1$}}
        edge from parent node[right]{$N_2(1) =1$}
      }
      edge from parent node[right,xshift=10]{$N_1(0) =1$}
    }
;
\end{tikzpicture}}
\caption{A Tree based visualization of randomness for a two agent system evolving over three bid times. The noise values are allowed to be binary, $0$ or $1$.}
\label{extenform}
\end{figure}
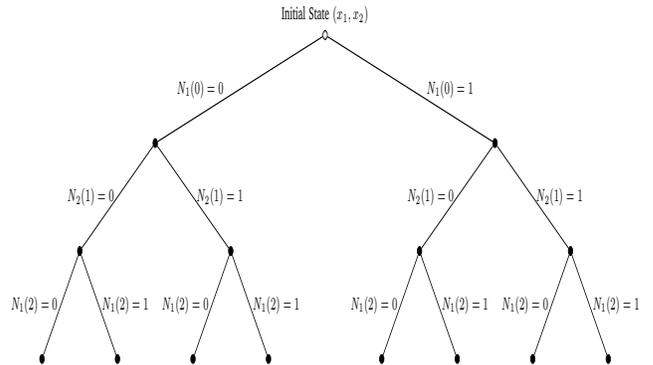
\section{Iterative Bidding Schemes}\label{sec:ibs}
The key contribution of this work is to propose solutions to the~\ref{p0} in the form of Iterative Bidding Schemes (IBS), as in Walrasian tatonnement~\cite{arrow}. Such schemes intertwine two simple processes, which we call \textit{Bid Update} and \textit{Price Update}. We begin by defining two key elements of the IBS, the bid function and the price function.

\textit{Bid Function}: A bid \emph{sequence} by agent $i$ specifies to the ISO how much electricity that agent will purchase (negative if supplying) in every time period from that time till the final time. At time $t$ it is a sequence of the form $(U_i(t), U_i(t+1),\ldots, U_i(T))$. A bid \emph{function} (in short just ``bid") specifies the bid sequence, as a function of the past history of observed noise $N(s), s<t$. In Figure~\ref{extenform} the bid function of each agent simply specifies, for each node in the tree, the amount of electricity that agent $i$ is willing to purchase when the system passes through that node if it ever does so. We note that $U_i(\omega,t)$ is adapted to the filtration $\mathcal{F}_t$. The bid function of agent $i$ will be denoted by $U_i$.

A \textit{price function} is a function announced by the ISO, which specifies for each time $t$, as a function of the past history of noise $N(s), s<t$, the price $\lambda(t)$ at which electricity will be sold/bought in the market. In the tree example of Figure~\ref{extenform}, this corresponds to the market clearing price corresponding to each node of the tree. The price function $\{\lambda(\omega,t)\}$ is also an $\mathcal{F}_t$-adapted stochastic process, which will be denoted by $\lambda$.

\textit{Bid Update}: Let us suppose, for the  time being, that the ISO has somehow declared a price function $\lambda$. In the \textit{Bid Update}, each agent $i$ changes its bid in response to the price function $\lambda$. In order to derive its new bid, it solves the following problem, dubbed Agent $i$'s Problem,
\begin{align*}
\min \mathbb{E}\left\{\sum_{t=0}^{T} c_i(X_i(t),U_i(t)) + \lambda(t) U_i(t) \right\}.
\end{align*}
It maximizes agent $i$'s total net utility, defined as the utility $-c_i(X_i(t))$ it derives from its state being $X_i(t)$, minus the amount $\lambda(t)U_i(t)$ it pays for the electricity.

\textit{Price Update} The ISO updates the price function in response to the agents having submitted their updated bids. Since in our context here the sole purpose of the ISO is to make sure that the net demand equals net supply, we will consider a simple rule by which it raises prices if demand exceeds supply and reduces otherwise, i.e., guided by the excess consumption function. Suppose the previous price was $\lambda^k$ and the bid was $U^k$. Then, the Price Update is,
\begin{align*}
\lambda^{k+1}(t) = \lambda^{k}(t) \left(1-\alpha_k\right) + \alpha_k \left(\sum_i U^k_i(t)\right)
\end{align*}
where $\alpha_k > 0$ is an ``adaptation gain". We employ the choice $\alpha_k = 1/k$, which satisfies the twin conditions $\sum_{k=0}^\infty \alpha_k = \infty$, and $\sum_{k=0}^\infty \alpha_k^2 < +\infty$, a common convergence condition in stochastic approximation~\cite{borkarbook}.

It will be the object of the following section, to show that an iteration of Bid Update-Price Update can solve the~\ref{p0} under some conditions.
\section{The Deterministic Case}\label{cdc}
First we consider the ISO Problem for deterministic systems,
\begin{align}\label{detp}
& \min \sum_{t=0}^{T} \sum_{i=1}^{M}c_i \left(x_i(t),u_i(t)\right)\notag\\
&\mbox{such that } \sum_i u_i(t) = 0,\mbox{ for } t=0,1,\ldots,T, \notag\\
&\mbox{and } x_i(t+1) = f_i^t(x_i(t),u_i(t)), \mbox{ for }\notag\\
& i =1,2,\ldots,M,\mbox{ and } t=0,1,\ldots,T-1.
\end{align} 
The intermediate variables $x_i(t)$ can be expressed in terms of the inputs $u_i:=\left(u_i(1),u_i(2),\ldots,u_i(T-1)\right)$
 and thus the cost term $\sum_{t=1}^{T} \sum_{i=1}^{M}c_i \left(x_i(t),u_i(t)\right)$ can also be expressed solely as a function of the inputs $u_i, i=1,2\ldots,M$. Convexity plays a major role, as noted by Arrow~\cite{arrow}.
\begin{assumption}[Convexity Assumption]\label{convexass}
For $i=1,2,\ldots,M$, the function $\sum_{t=1}^{T} c_i \left(x_i(t),u_i(t)\right)$ is convex in the input vector $\left(u_i(1),u_i(2),\ldots,u_i(T-1) \right)$. 
\end{assumption} 
We will now derive a solution to the~\ref{p0} under Assumption~\ref{convexass}, and show that it achieves the same performance as that of optimal centralized control.

The cost is convex in the vectors $u_i$. Employing the definition of each $x_i(t)$ as $f^t_i\left(x_i(t-1), u_i(t-1)\right)$, the associated Lagrangian and dual function are given by,
\begin{align*}  
\mathcal{L}\left(u, \lambda \right): &= \sum_{i=1}^{M}\left\{\sum_{t=0}^{T} c_i(x_i(t)) + \lambda (t) u_i(t)\right\},\\
D(\lambda) :&= \min_{u}  \mathcal{L}\left(u, \lambda \right),
\end{align*}
where $u :=\left(u_1,u_2,\ldots,u_M \right)$, and $\lambda :=\left(\lambda(0),\lambda(1),\ldots,\lambda(T-1)\right)$.
The Lagrangian is the sum of the costs incurred by each individual agent. Hence, given the Lagrange multipliers $\lambda$, the inputs $u_i$ minimizing the Lagrangian can be calculated in a decentralized fashion, with each agent $i$ solving its own problem called Agent $i$'s Problem
\begin{align}
& \min \sum_{t=0}^{T} c_i(x_i(t)) + \lambda(t) u_i(t),\\
&\mbox{subject to } x_i(t+1) = f_i^t(x_i(t),u_i(t)) \notag\\
&\mbox{for }t=0,1,\ldots,T-1.\notag
\end{align}
Each agent $i$ then submits this optimal $u_i(\cdot)$ to the ISO as its bid. This enables the computation of the dual function for each value of $\lambda$.

Note that the sub-gradient with respect to $\lambda$ of the Dual function $D(\lambda)$ is $\left(\sum_i u^k_i(0),\sum_i u^k_i(1),\ldots,\sum_i u^k_i(T-1)\right)$. Since the dual problem of finding the prices $\lambda$ that maximize $D(\lambda)$ is convex, it can be solved via the sub-gradient iteration~\cite{bertsekas,rockafellarconvex,boydlecture}.
\begin{align}
\lambda^{k+1}(t) = \lambda^k(t) \left(1-\alpha_{k}\right) + \alpha_{k} \left(\sum_i u^k_i(t)\right), t\geq 0,
\end{align}
where $k$ is the index which keeps track of the iteration number. The iterations end when the price vector $\lambda(t)$ converges to the optimal value $\lambda^\star(t)$. The resulting solution is optimal for the ISO Problem due to the convexity assumptions. 
\section{Privately Observed Noise}\label{mdp:pi}
Suppose that even though the agents do not observe the private noises of other agents, or know their system dynamics or utility functions, they know the \emph{laws} of the combined noise process, $\mathcal{L}(N(t))$. That is, in the context of the uncertainty tree of Section~\ref{sec:tree}, the agents know the topology of the tree, and the transition probabilities along the edges.

We will now show that under the following convexity assumption, the~\ref{p0} has an optimal solution. 

\begin{assumption}\label{assum1}
The function
\begin{align}\label{conv1}
\sum_t c_i(X_i(t),U_i(t)), i=1,2,\ldots,M,
\end{align}
is convex in the vector $\left\{U_i(t), t=0,1,\ldots,T-1\right\}$ for fixed noise sequence.
\end{assumption}
The algorithm presented is iterative, and composed of Bid-Price updates. The bid submitted by each agent $i$ at time $s$ is a random process that maps the space $\Omega \times \left\{s,s+1,\ldots,T-1\right\}$ to $\mathbb{R}$. This is akin to Arrow's~\cite{arrow} approach of treating each good available at a certain time and place as a separate good. The bid process is adapted to the filtration $\mathcal{F}_t$. At each time $t$, it specifies to the ISO, as a function of the past noise $N(s), s<t$, the amount of electricity that the agent is willing to purchase at time $t$. Only the initial bid, $U_i(s)$ is implemented at time $s$, as in Model Predictive Control. Figure~\ref{flo} summarizes the algorithm.
\begin{theorem}
Algorithm~\ref{a2} solves the~\ref{p0} when the cost functions satisfy Assumption~\ref{assum1}, with each agent $i$ having access only to its private noise $N_i(t)$ and the common noise $N_c(t)$, with the law of the combined noise process, i.e., $\mathcal{L}(N(t))$, being known publicly. 
\end{theorem}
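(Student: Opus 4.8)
The plan is to lift the Lagrangian-duality argument of the deterministic case (Section~\ref{cdc}) onto the uncertainty tree of Section~\ref{sec:tree}, reducing the stochastic problem to a finite-dimensional convex program, and then to read off the Bid Update and Price Update as primal decomposition and dual subgradient ascent. First I would note that, since each $N(t)$ takes finitely many values, an $\mathcal{F}_t$-adapted bid $U_i$ is nothing but a finite vector assigning one real number to each node at depth $t$, and the expectation in the~\ref{p0} is the probability-weighted sum over the leaves of the tree. Expressing each $X_i(t)$ through the dynamics as a function of $(U_i(0),\dots,U_i(t-1))$ and the realized noise, Assumption~\ref{assum1} renders $\sum_t c_i(X_i(t),U_i(t))$ convex in agent $i$'s bid along every path; since expectation is a nonnegative weighted sum, the objective is convex in the entire bid profile. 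The balance constraints $\sum_i U_i(\omega,t)=0$ are affine and imposed at every node, so the~\ref{p0} is a finite-dimensional convex program with affine equality constraints.

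Next I would introduce an $\mathcal{F}_t$-adapted multiplier $\lambda$, one price per node, dualizing the per-node balance constraints. The Lagrangian $\mathbb{E}\{\sum_t\sum_i(c_i(X_i(t),U_i(t))+\lambda(t)U_i(t))\}$ then separates across agents, so that minimizing it for fixed $\lambda$ is exactly the family of Agent~$i$'s Problems solved in the Bid Update and $D(\lambda)=\sum_i D_i(\lambda)$. Because the primal is convex with only affine equality constraints, I would invoke a polyhedral strong-duality result~\cite{rockafellarconvex} to conclude that there is no duality gap and that a dual optimizer $\lambda^\star$ induces a primal-optimal, balance-feasible bid profile. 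The excess-consumption vector $(\sum_i U_i^k(t))_t$ is a subgradient of the concave $D$ at $\lambda^k$, so the Price Update is diminishing-step subgradient ascent; with $\alpha_k=1/k$ meeting the Robbins--Monro conditions, the cited convergence theory~\cite{bertsekas,borkarbook} yields $\lambda^k\to\lambda^\star$ and convergence of the induced bids to the centralized optimum.

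Finally I would verify that the iteration is genuinely decentralized in the claimed sense. Since $f_i^t$ and $c_i$ involve only $N_i$ and $N_c$, agent $i$'s state and cost at every node are determined by its \emph{own} observed noise, and the expectation defining Agent~$i$'s Problem can be evaluated by integrating out the unobserved noise with the publicly known law $\mathcal{L}(N(t))$; the only exogenous input the agent needs is the announced price function, so it never discloses $f_i^t$, $c_i$, or its states. I expect the crux of the argument to lie here, at the interface of duality with the non-classical information structure: the per-node prices carry, as a \emph{sufficient statistic}, precisely the aggregate information that couples the agents, and I must show that responding to this price signal---together with the known law---lets each agent reconstruct its optimal centralized action even though it never observes the other agents' private noise. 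Establishing strong duality and primal recovery for the tree-indexed program with its per-node equality constraints, and confirming that the price signal suffices as this sufficient statistic, is where the real work concentrates; the convexity bookkeeping and the subgradient-convergence estimates are comparatively routine.
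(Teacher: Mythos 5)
Your first two paragraphs reproduce the paper's argument for what it calls the Commonly Observed Noise Problem: index bids and prices by nodes of the uncertainty tree of Section~\ref{sec:tree}, observe that Assumption~\ref{assum1} makes the node-weighted objective convex with affine per-node balance constraints, dualize node by node, and recover the Bid Update and Price Update as Lagrangian decomposition plus subgradient ascent, exactly paralleling Section~\ref{cdc}. That part is sound and matches the paper.

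The gap is in your third paragraph, and it is the part of the theorem that actually concerns private noise. You propose that each agent handle the other agents' unobserved noises by ``integrating them out'' using the public law $\mathcal{L}(N(t))$, and you candidly note that you still ``must show'' that the price signal then suffices to reconstruct the centralized action. That is not a detail to be deferred: the per-node prices $\lambda(v)$ and balance constraints $\sum_i U_i(v)=0$ live on the \emph{joint} noise tree, so a bid adapted only to agent $i$'s filtration $\sigma(N_i, N_c)$ is a strict restriction of the feasible set over which strong duality was established, and it is precisely the optimality of that restriction --- the non-classical information structure difficulty --- that the theorem asserts. Your outline never closes this loop. The paper closes it by an entirely different device, visible in Algorithm~\ref{a2} itself: the bid--price iteration is re-run at every time $s$ and only the first entry $U_i^{\star}(s)$ of the converged bid is implemented, as in Model Predictive Control. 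At $s=0$ no private noise has yet been realized, so the root action produced by the commonly-observed-noise argument is centrally optimal; at each $s\geq 1$ one regards $s$ as a fresh initial time with each agent knowing its own $X_i(s)$, and the same argument yields the optimal action at the new root. Because the implemented action always sits at the root of the current subtree, the asymmetry of information about \emph{future} private noise never has to be resolved. Without this re-solving step (or a genuine proof of your ``integrate out and respond to prices'' claim), your argument establishes the theorem only for the case $N_i \equiv 0$ for all $i$.
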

\begin{proof}
Let us first consider the special case where there is only a commonly observed noise $N_c$ and there are no private noises, called the Commonly Observed Noise Problem. 
Suppose for simplicity of visualization the noise process $N(t)$ assumes only finitely many values allowing it to be represented by a tree as in Fig.~\ref{extenform}.

Let us suppose that $x(0)$ is fixed, without loss of generality. Let $p_v$ denote the probability of node $v$ in the uncertainty tree. The depth of the node in the tree indicates time. Every Markov policy, mapping states and time to actions, specifies an action $U(v) := (U_0(v), U_1(v),\ldots, U_M(v))$ satisfying $\sum_i U_i(v) = 0$ for every node $v$ in the tree. This is easily seen by recursion starting at the root which corresponds to the initial time and state of the system, and noting that each node then also indicates the state of the system at that time. Now consider also a more general ``tree policy" that specifies a $U(v) := (U_0(v), U_1(v),\ldots, U_M(v))$ satisfying $\sum_i U_i(v) = 0$ for every node $v$ in the tree. It is more general than a Markov policy since two nodes in the tree at the same depth may correspond to the same state $X(t)$ but a tree policy is allowed to prescribe different actions for them. Hence the class of tree policies also contains an optimal policy. 

For every such tree policy, for every node $v$, there is a unique sequence of actions $U^v := \left\{U(0), U(1),\ldots,U(t)\right\}$ that was taken in the preceding $t$ steps, where $t$ denotes the depth of the node $v$. The state $X(t)$ at time $t$ corresponding to the node $v$ is thereby determined by $(v,u^v)$. The centralized optimization problem can then be written as the following optimization problem,
\begin{align*}
& \min \sum_{i=1}^{M}\sum_v p_v c_i \left(v,U^v\right) \\
& \mbox{ such that } \sum_i U_i(v) = 0,\forall v.
\end{align*}
Note that $c_i(v,u)$ is convex in $u$. Hence this is a convex programming problem with no duality gap. Associating Lagrange multiplier $\lambda(v)$ with the constraint $\sum_i U_i(v) = 0$, and letting $\lambda : = \{\lambda(v) \}$, we obtain,
\begin{align*}  
\mathcal{L}\left(U, \lambda \right):& = \sum_{i=1}^{M}\sum_v p_v\left\{\sum_v c_i(v,u^v) + \lambda (v) U_i(v)\right\}.
\end{align*}
We will call the process $\lambda(v)$ as the ``price process".
Each agent submits a bid for each possible partial realization $v$ of the noise process, while the ISO specifies a price at each $v$. Now the proof parallels the proof in the deterministic case.

This proof extends to the case where there are also private noises. At time $0$, there are no private noises $N_i(-1)$, and so the above proof holds at time $0$. Noting this, it follows that the result also holds at each time $s \geq 1$ since the bid-price iteration is repeated at each such time, and we can simply regard $s$ as the new ``initial" time. 
\end{proof}
\begin{algorithm}
\caption{}
\label{a2}
\begin{algorithmic}
\STATE \textbf{Assumption:}  The law of the combined noise process $\mathcal{L}(N)$ is common knowledge of all agents and ISO. 
 \FOR{ bidding times $s=0$ to $T-1$}
\STATE $k=0$
\STATE \REPEAT
\STATE Each agent $i$ solves the problem
\begin{align}\label{AP1}
\min \mathbb{E}\left\{\sum_{t\geq s} c_i(X_i(t),U_i(t)) + \lambda^{k}(t) U_i(t) \right\},
\tag{Agent i's Problem}
\end{align}
with initial condition $X_i(s)$ for the optimal $\{U^k_i(t),s\leq t\leq T-1\}$, and submits it to ISO.
\STATE ISO declares new price, i.e. 
\begin{align*}
& \lambda^{k+1}(t) = \lambda^k(t) \left(1-\alpha_k\right) + \alpha_k \left(\sum_i U^k_i(t)\right),\\
& \mbox{ for times } t\geq s.
\end{align*}
\STATE $k\to k+1$
\UNTIL{$U^k_i(t)$ converges a.s. to $U^{\star}_i(t),\quad \forall t\geq s$}
\STATE ISO implements $U^{\star}_i(s)$
\ENDFOR
\end{algorithmic}
\end{algorithm}
We note that the assumption of the common knowledge of $\mathcal{L}(N(t))$ can be removed by utilizing the technique of Stochastic Approximation or other learning techniques~\cite{borkarbook,kushner,robbins}, so that the agents can ``learn" the laws $\mathcal{L}(N(t))$.

The above algorithm above is exponentially complex in $T$ due to the number of possible states in the tree, even when there are only two possible values for each $N_i(t)$. In the next section we will see that we can dramatically simplify the algorithm in the LQG context.
\section{The Case of Linear Systems}
This section treats the special case of the~\ref{p0} when the $M$ agents have linear Gaussian dynamics and quadratic costs. The noises of all agents are independent and mean zero. Each agent $i$ has a quadratic cost criterion, i.e., the cost functions $c_i(x_i,u_i) = x_i^\intercal Q_i x_i + u_i^\intercal R_i u_i$ are quadratic, with weighting matrices $Q_i \geq 0$ and $R_i > 0$ . Let us call this the Distributed Constrained LQG (DCLQG) Problem:
\begin{align}\label{p1}
& \min \mathbb{E} \left(\sum_{t=1}^{T}\sum_{i=1}^{N} X^\intercal_i(t)Q_i X_i(t) + U^{\intercal}_i(t) R_i U_i(t)\right)\notag\\
&\mbox{subject to } X_i(t+1) = A_iX_i(t) +B_i U_i(t) + B_i N_i(t),\notag\\
& \qquad t=0,\ldots,T-1,\notag\\
&\mbox{and } \sum_i U_i(t) =0, t=0,\ldots,T-1.
\end{align}
(The case of time-varying systems is analogous to time-invariant systems, and omitted for brevity)
We will assume that the system dynamics given by $(A_i,B_i)$, the cost functions given by $(Q_i,R_i)$, and the observation structure are all private, i.e., none of the agents have knowledge of the system parameters or the costs of the other agents, and that the state process $X_i$ is observed only by the agent $i$. 

We will derive an Iterative Bidding Scheme which is much simpler than the algorithm proposed in Section~\ref{mdp:pi} in the following critical aspect: The bid function submitted at time $t$ specifying the quantity of electricity that agent $i$ is willing to purchase at times $t,t+1,\ldots,T-1$ does not depend on the outcomes of noise sequence $N(s),s\geq t$. It is simply a vector $(u_i(t),u_i(t+1),\ldots,u_i(T-1))$ comprising of $T-t+1$ entries.
This is a drastic reduction in complexity of the bidding scheme. At each time $t$, the following iteration takes place: Each agent bids a vector of future purchases corresponding to a deterministic certainty equivalent system, in response to prices announced by the ISO for future power, and the ISO updates the prices in return, until convergence.
\begin{definition}[Certainty Equivalence]
A stochastic control problem is said to possess the property of certainty equivalence if the optimal policy for the stochastic control problem coincides with the optimal policy for the corresponding deterministic control problem in which the noise is absent. 
\end{definition}
\begin{theorem}
The following bidding scheme achieves optimality for the ISO Problem with LQG agents. At each time $s$, in respose to the $k$-th iterate of the price sequence $(\lambda^k(s),\lambda^k(s+1),\ldots,\lambda^k(T))$, agent $i$ announces the optimal open loop sequence $(u^k_i(s), u^k_i(s+1),\ldots,u^k_i(T))$ for the deterministic LQ problem:
\begin{align*}
&\min \sum_{t\geq s} x_i^\intercal (t) Q_i x_i(t) + u_i^\intercal (t) R_i u_i(t)+ \lambda^k(t)u_i(t)\\
&\quad \mbox{s.t. } x_i(t+1) = A_ix_i(t) + B_iu_i(t) \\
&\mbox{ for } t=s, s+1,\ldots,T-1.
\end{align*}
In response, the ISO adjusts the prices according to:
$\lambda^{k+1}(t) = \lambda^k(t) \left(1-\alpha_k\right) + \alpha_k \left(\sum_i u^k_i(t)\right), t\geq s$. This process is iterated till it converges to $(u^{\star}_i(s),u^{\star}_i(s+1),\ldots,u^{\star}_i(T-1))$
and $(\lambda^{\star}(s),\lambda^{\star}(s+1),\ldots,\lambda^{\star}(T-1))$.
At time $s$, the price is set at $\lambda^{\star}(s)$ and agent $i$
applies the input $u^{\star}_i(s)$.
\end{theorem}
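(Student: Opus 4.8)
The plan is to reduce the stochastic DCLQG Problem~\eqref{p1} to a sequence of deterministic convex programs by establishing a certainty equivalence principle for the coupled problem, and then to invoke the dual--decomposition analysis of the deterministic case (Section~\ref{cdc}) to show that the inner bid--price iteration computes the required open--loop plan in a fully decentralized way. Concretely, I would first prove that the centralized optimal policy for the DCLQG Problem is a \emph{pure linear state feedback} whose gains are independent of the noise statistics; then show that the first element of the optimal open--loop solution of the deterministic problem started from the current realized state coincides with this feedback action; and finally argue that the iteration over bids and prices solves that deterministic problem exactly, thereby implementing the certainty--equivalent optimum at each time $s$.

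First I would set up the dynamic programming recursion for the centralized problem with stacked state $X(t)=(X_1(t),\ldots,X_M(t))$, block--diagonal $A=\mathrm{diag}(A_i)$, $B=\mathrm{diag}(B_i)$, $Q=\mathrm{diag}(Q_i)$, $R=\mathrm{diag}(R_i)$, and the single homogeneous coupling constraint $\sum_i U_i(t)=0$, written as $C\,U(t)=0$. Proceeding by backward induction, I would posit $V_t(x)=x^\intercal P_t x + c_t$ and compute $\mathbb{E}\,V_{t+1}(Ax+Bu+BN)$. Because $N(t)$ is zero mean and independent of $\mathcal{F}_t$, the cross term vanishes and the noise contributes only the constant $\mathrm{tr}(B^\intercal P_{t+1}B\,\Sigma_N)$ (with $\Sigma_N$ the noise covariance); the remaining minimization of $u^\intercal(R+B^\intercal P_{t+1}B)u + 2x^\intercal A^\intercal P_{t+1}Bu$ over the subspace $\{u:Cu=0\}$ is a strictly convex quadratic program whose minimizer is \emph{linear} in $x$, namely $U^\star(t)=-K_t X(t)$, with no affine offset precisely because the constraint is homogeneous. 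This propagates the form $V_t(x)=x^\intercal P_t x + c_t$ and shows that $P_t,K_t$ obey a constrained Riccati recursion identical to that of the deterministic problem, the noise affecting only $c_t$. This is exactly certainty equivalence in the sense of the definition above.

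Next I would observe that for the deterministic LQ problem started from $x(s)=X(s)$ the optimal open--loop and optimal feedback solutions coincide, so the first input of the optimal deterministic plan equals $-K_s X(s)$, which by the previous step is the true stochastic optimum at time $s$. Hence implementing $u^\star_i(s)$ and re--solving at $s+1$ from the newly realized state $X(s+1)$ is optimal, and the principle of optimality guarantees that the plan recomputed at $s+1$ agrees with the centralized optimal policy on the implemented action. It then remains to show that the inner iteration actually produces this deterministic plan. Dualizing $\sum_i u_i(t)=0$ with multipliers $\lambda(t)$ decouples the deterministic program into the per--agent LQ problems stated in the theorem; since each block is strictly convex ($R_i>0$), each agent's subproblem has a unique minimizer that it computes privately, and the aggregate bid $\sum_i u^k_i(t)$ is exactly the subgradient of the concave dual, as in Section~\ref{cdc}. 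The price update is therefore subgradient ascent on the dual, which under $\alpha_k=1/k$ with $\sum_k\alpha_k=\infty$, $\sum_k\alpha_k^2<\infty$ converges to $\lambda^\star$; strict convexity then yields exact recovery of the unique primal optimizer $u^\star_i$. No agent reveals $(A_i,B_i,Q_i,R_i)$ or its state, so decentralization and privacy are preserved.

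The main obstacle is the certainty--equivalence step for the \emph{constrained} problem: I must verify that the coupling constraint neither destroys the quadratic form of the value function nor injects a noise--dependent feedforward term. The delicate points are that $\sum_iU_i(t)=0$ is homogeneous, so the constrained quadratic minimizer is linear rather than affine in $x$, and that the additive, zero--mean noise enters only through the constant $c_t$; both must be checked to conclude that $K_t$ matches the deterministic gain. A secondary technical point is the exact primal recovery from the dual iteration: I would rely on the uniqueness afforded by $R_i>0$ together with the absence of a duality gap (convexity, cf.\ Assumption~\ref{convexass} and Assumption~\ref{assum1}) to argue that almost sure convergence of the prices forces convergence of the bids to $u^\star_i$, thereby closing the loop with the certainty--equivalence argument.
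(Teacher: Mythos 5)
Your proposal is correct, and it reaches the conclusion by a genuinely different route at the key step. The paper handles certainty equivalence by \emph{eliminating} one agent's input: it substitutes $u_M(t) = -\sum_{i=1}^{M-1} u_i(t)$ into the coupled problem, obtaining a standard \emph{unconstrained} LQG problem in the remaining inputs, and then simply cites the classical certainty equivalence theorem to conclude that the deterministic feedback gain $\Gamma(0)$ remains optimal in the presence of noise. You instead prove certainty equivalence \emph{directly for the constrained problem} by backward induction: positing $V_t(x)=x^\intercal P_t x + c_t$, noting that the zero-mean noise kills the cross term and contributes only to $c_t$, and observing that minimizing a strictly convex quadratic over the homogeneous subspace $\{u : Cu=0\}$ yields a minimizer that is linear (not merely affine) in $x$, so the constrained Riccati recursion matches the deterministic one. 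Both arguments are sound; the paper's substitution is shorter because it offloads the work to a known theorem, while your direct derivation makes explicit exactly why the coupling constraint is harmless (homogeneity) and would generalize more readily to other subspace constraints on $U(t)$. The remainder of your argument --- dual decomposition of the deterministic program into per-agent LQ problems, subgradient ascent on the dual with $\alpha_k=1/k$, and re-solving at each $s$ from the realized state --- mirrors the paper's, and you are in fact more careful than the paper on two points it glosses over: exact primal recovery from the converged dual (via uniqueness from $R_i>0$ and zero duality gap) and the coincidence of open-loop and feedback solutions in the deterministic re-solve.
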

\begin{algorithm}
\caption{}
\label{a3}
\begin{algorithmic}
 \FOR{bidding times $s=0$ to $T-1$}
\STATE $k=0$
\STATE  Initialize $\lambda^k (t),t\geq s$ to some arbitrary value.
 \STATE \REPEAT
 \STATE Each agent $i$ solves the problem 
 \begin{align}
 \min \sum_{t\geq s}^{T} x_i^\intercal (t) Q_i x_i(t) +u_i^\intercal (t) R_i u_i(t)+ \lambda^k(t)u_i(t)
 \end{align}
and submits the optimal values, denoted $u_i^k(t)$ for $t\geq s$ to the ISO.
 \STATE ISO updates the prices,
 \begin{align*}
& \lambda^{k+1}(t) = \lambda^k(t) \left(1-\alpha_k\right) + \alpha_k \left(\sum_i u^k_i(t)\right), \\
& \mbox{ for times }t\geq s.
\end{align*}
 Increment $k$ by $1$
\UNTIL{$u^k_i(t)$ converges to $u^{\star}_i(t)$}
\STATE implement $u^\star(s)$
\ENDFOR
 \end{algorithmic}
 \end{algorithm}
 The key to showing the existence of such a simple bidding scheme lies in utilizing the certainty equivalence property of LQG systems~\cite{kumar}.
\begin{proof}
Let 
\begin{align*}
& x:=(x_1,x_2,\ldots,x_M), u:=(u_1,u_2,\ldots,u_M),\\
& A := diag(A_1, A_2,\ldots,A_M), B:=diag(B_1,B_2,\ldots,B_M),\\ 
& Q=diag(Q_1,Q_2,\ldots,Q_M), R=diag(R_1,R_2,\ldots,R_M),
\end{align*}
 and consider the following deterministic linear, quadratic regulator (LQR) problem with no noise, but featuring the energy balance constraint,
\begin{align}
& \min \sum_{t=0}^{T} x^\intercal (t)Qx(t) + u^{\intercal}(t) R u(t)\\
&\mbox{subject to } x(t+1) = Ax(t) +B u(t),\label{s:eq}\\
& \sum_{i=1}^{M} u_i(t) = 0 \mbox{ for } t=0,\ldots,T-1.\notag
\end{align}
Since the state is affine in $u$, the cost is convex in $u$. Hence this centralized problem can be solved by the Bid-Price iteration between the agents and the ISO as shown for the deterministic problem. In particular, at time $0$, the end result of the scheme is the optimal action $u(0)$. This is arrived at by the ISO announcing a sequence of prices for all future times and the agents bidding their consumptions/generation sequences at all future times. 
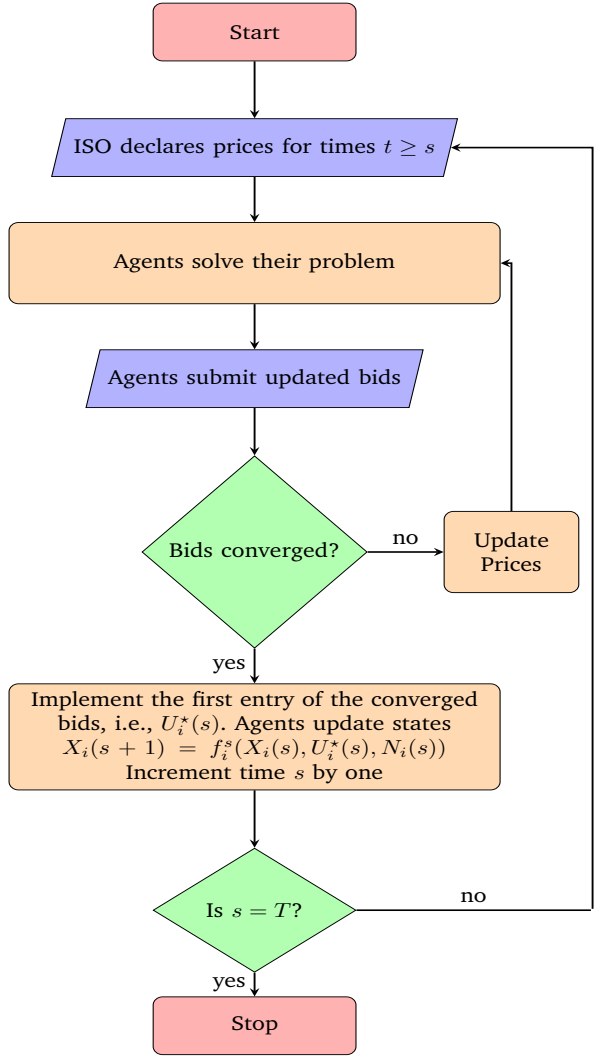
\begin{figure}[h]\centering\resizebox{8cm}{14cm}{ 
\begin{tikzpicture}[node distance=2cm]
\tikzset{trapezium stretches=true}
\node (start) [startstop] {Start};
\node (in1) [io, below of=start] {ISO declares prices for times $t\geq s$ };
\node (pro1) [process, below of=in1] {Agents solve their problem}; 
\node (in2) [io, below of=pro1] {Agents submit updated bids}; 
\node (dec1) [decision, below of=in2,yshift=-1cm] {Bids converged?};
\node (pro2) [upd, right of=dec1,xshift=1.8cm] {Update Prices};
\node (pro3) [process, below of=dec1,yshift = -1.2cm] {Implement the first entry of the converged bids, i.e., $U^{\star}_i(s)$. Agents update states $X_i(s+1)=f_i^s(X_i(s),U_i^{\star}(s),N_i(s))$ Increment time $s$ by one};
\node (dec2) [decision, below of=pro3,yshift = -1cm] {Is $s=T$?};
\node (stop) [startstop, below of=dec2] {Stop};
\node (dummy) [circle,radius = .0pt,inner sep=0pt,right of = dec2,xshift = 3cm]{};
\draw [arrow] (start)--(in1);
\draw [arrow] (in1)--(pro1);
\draw [arrow] (pro1)--(in2);
\draw [arrow] (in2)--(dec1);
\draw [arrow] (dec1)-- node[anchor=east] {yes} (pro3);
\draw [arrow] (pro3) -- (dec2);
\draw [arrow] (dec2) --node[anchor=east] {yes}  (stop);
\draw [arrow] (pro2) |- (pro1);
\draw [arrow] (dec1) -- node[anchor=south] {no} (pro2);
\draw [thick] (dec2) -- node[anchor=south]{no} (dummy);
\draw [arrow] (dummy) |- (in1);
\end{tikzpicture}
}
\caption{Decision flow in Algorithm~\ref{a2}.}
\label{flo}
\end{figure}
Now note that due to the energy balance at each time, agent $M$ is forced to choose $u_M(t) = - \sum_i^{M-1}u_i(t)$ for all $t$. Hence one can substitute this value for $u_M(t)$ and obtain a standard LQR problem where there is no separate energy balance constraint. For this reduced and standard deterministic linear quadratic regulator problem, the optimal solution is given by linear feedback $u(0) = \Gamma(0) x(0)$, where $\Gamma(\cdot)$ is the optimal feedback gain.

Now consider the corresponding reduced stochastic LQG problem where there is white Gaussian noise in the state equations~\eqref{s:eq}. By Certainty Equivalence~\cite{kumar}, the same feedback law as in the deterministic reduced LQR problem is also optimal. In particular, in state $x(0)$ at time $0$, $u(0)=\Gamma(0)x(0)$ continues to be optimal. Now, in our proposed bidding scheme for the LQG problem, each agent bids on the basis of a private deterministic system for itself. Hence it leads to the same Bid-Price iteration result at time $0$. Hence it arrives at the same $u(0)$, which however is also optimal for the stochastic LQG problem. 

Thus we see that the Bid-Price iteration scheme determines the optimal actions for the agents at time $0$. Now our scheme for the LQG problem repeats such a Bid-Price scheme iteration at each time $t$. Each $x(t)$ can be regarded as an initial state for the system started at time $t$, and the same argument as above shows that the actions $u(t)$ that it results in for the agents at all times $t$ are also optimal.
\end{proof}
We note the following important aspects of the proposed algorithm. The critical feature that there is an iteration of bids at each time $t$ is important. Also important is that at each stage it is the future sequence of prices that is iterated. 

It should be noted that the alternative of announcing a ``bid curve" of price vs. generation for a single time $t$ does not work in the dynamic case. The reason is that the current optimal generation depends on future prices, so iteration of price at only one time is not sufficient to ensure optimal decisions  when agents are dynamic systems.

\section{Concluding Remarks}
We have posed the~\ref{p0} of maximizing the total utility/minimizing the total operating cost of the electricity grid, while not allowing the revelations of the dynamics, states or utilities of the agent. It is more complex than a decentralized stochastic control problem due to the non-revelation feature. We have shown that when the agents are LQG systems the problem admits a simple solution utilizing iterative bidding schemes, and attains the same performance as that of an optimal centralized control policy. Under the proposed policy, the sufficient statistics are vastly simplified, and each agent $i$ needs to only keep track of its present state $X_i(t)$. This is in contrast to the general case of decentralized stochastic control, in which the agents need to keep track of the entire history in order to implement an optimal policy, which is in any case generally intractable to compute. Thus not only is our Algorithm decentralized, and easy to implement, but it also leads to a large amount of data reduction. We further note that our Algorithm is privacy preserving.
\bibliographystyle{IEEEtran}
\bibliography{../GTS/combinedbib}

\end{document}